\gdef\@copyrightpermission{
  \begin{minipage}{0.3\columnwidth}
   \href{https://creativecommons.org/licenses/by/4.0/}{\includegraphics[width=0.90\textwidth]{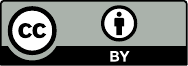}}
  \end{minipage}\hfill
  \begin{minipage}{0.7\columnwidth}
   \href{https://creativecommons.org/licenses/by/4.0/}{This work is licensed under a Creative Commons Attribution International 4.0 License.}
  \end{minipage}
  \vspace{5pt}
}
\title{Evaluating Decision Rules Across Many Weak Experiments}
\author{Winston Chou}
\affiliation{
  \institution{Netflix}
  \city{Los Gatos}
  \state{CA}
  \country{USA}
}
\email{wchou@netflix.com}
\author{Colin Gray}
\affiliation{
  \institution{Netflix}
  \city{Los Gatos}
  \state{CA}
  \country{USA}
}
\email{coling@netflix.com}
\author{Nathan Kallus}
\affiliation{
    \institution{Netflix}
    \city{Los Gatos}
    \state{CA}
    \country{United States}
}
\affiliation{
    \institution{Cornell University}
    \city{New York}
    \state{NY}
    \country{United States}
}
\email{nkallus@netflix.com}
\author{Aur\'elien Bibaut}
\affiliation{
  \institution{Netflix}
  \city{Los Gatos}
  \state{CA}
  \country{USA}
}
\email{abibaut@netflix.com}
\author{Simon Ejdemyr}
\affiliation{
  \institution{Netflix}
  \city{Los Gatos}
  \state{CA}
  \country{USA}
}
\email{sejdemyr@netflix.com}
\date{February 2025}
\newtheorem{theorem}{Theorem}[section]
\begin{document}

\begin{abstract}
Technology firms conduct randomized controlled experiments (``A/B tests'') to learn which actions to take to improve business outcomes.  In firms with mature experimentation platforms, experimentation programs can consist of many thousands of tests.  To effectively scale experimentation, firms rely on \emph{decision rules}: standard operating procedures for mapping the results of an experiment to a choice of treatment arm to launch to the general user population.  Despite the critical role of decision rules in translating experimentation into business decisions, rigorous guidance on how to evaluate and choose decision rules is scarce.  This paper proposes to evaluate decision rules based on their cumulative returns to business north star metrics.  Although intuitive and easy to explain to decision-makers, this quantity can be difficult to estimate, especially when experiments have weak signal-to-noise ratios.  We develop a cross-validation estimator that is much less biased than the naive plug-in estimator under conditions realistic to digital experimentation.  We demonstrate the efficacy of our approach via a case study of 123 historical A/B tests at Netflix, where we used it to show that a new decision rule would have increased cumulative returns to the north star metric by an estimated $33\%$, directly leading to the adoption of the new rule.
\end{abstract}
\keywords{A/B testing, experimentation, decision rules, proxy metrics, meta-analysis}

\maketitle

\section{Introduction}

Online randomized controlled experimentation (aka A/B testing) is the workhorse of product innovation in many technology companies \citep{deng2017b, thomke2020building}.  Netflix is no exception to this rule, running thousands of A/B tests per year on many diverse parts of the business \citep{ejdemyr2024estimating}.  This scale necessitates the adoption of norms and practices that are easily communicated between scientists, engineers, and product managers alike \citep{hbr2025experimentation}.

In particular, companies often rely on \emph{decision rules}: standard operating procedures that map the outcomes of an experiment to a ``winning" treatment arm to launch to the general user population.  Letting $Y$ denote a \emph{north star} business reward, such as long-term user retention or revenue, a simple example of a decision rule is:
\begin{equation}
    \texttt{\small Launch the treatment arm with the largest effect on $Y$.}
    \label{eqn:rule-simple}
\end{equation}

In practice, decision rules are often substantially more complicated than (\ref{eqn:rule-simple}).  For example, one may have:
\begin{itemize}
    \item \textbf{Statistical significance thresholds}, which dictate that treatments are only eligible for launch if their effect on $Y$ is statistically distinguishable from zero.
    \item \textbf{Proxy metrics}, which dictate that launch decisions instead rely on impacts to a blessed metric $S$ that is positively correlated with $Y$, but has a higher signal-to-noise ratio and/or is easier to measure in short, sample-constrained experiments \citep{bibaut2024learning,richardson2023pareto,tripuraneni2024choosing}. This also covers the use of \textit{surrogate indices}, which combine multiple proxy metrics \textbf{S} into a scalar-valued function $f(\text{\textbf{S}})$ \citep{athey2019surrogate}.
    \item \textbf{Guardrail metrics}, which only permit launches if they do not show a decline in a metric that is negatively correlated with $Y$, or which is important to minimize for other reasons (e.g., app crashes, customer service contacts) \citep{deng2016data}.
\end{itemize}

How should organizations evaluate and choose between these (and many other possible) decision rules? We propose evaluating decision rules based on their \emph{cumulative returns}, which are not measurable in any one experiment, but can be estimated across many historical experiments \citep{ejdemyr2024estimating}.  Intuitively, our method evaluates a decision rule by asking, ``What would the cumulative impact on $Y$ have been if all experiments were decided using this rule?'' 

Although this quantity is intuitive and easy to explain to decision-makers, simple backward-looking estimates of cumulative returns can be severely biased due to a \textit{winner's curse}: since winning arms are sometimes chosen due to noise, the underlying treatment effects of winning arms are likely smaller than the raw historical lifts suggest \citep{lee2018winner, ejdemyr2024estimating}.  A consequence of the winner's curse is that rules that exploit spurious correlations can seemingly perform well in past experiments, but generalize poorly to future experiments.  This bias persists with even an infinite number of experiments so long as each experiment has a finite sample size, and is exacerbated by the low signal-to-noise ratios endemic to digital experimentation \citep{larsen2024statistical, kohavi2014seven}.

To overcome the winner's curse, we demonstrate theoretically and empirically that a cross-validation estimator using experiment splitting \citep[][]{coey2019improving} is many times less biased than the plug-in estimator under conditions realistic to digital experimentation. The cross-validation estimator separates the data used to choose the winning arm from the data used to evaluate its returns, eliminating the winner's curse. The tradeoff is a small negative bias that is easily mitigated in practice.

Our work is closely related to research on optimizing the returns to industrial-scale experimentation \citep{azevedo2018b, azevedo2023b, sudijono2024optimizing}.  This literature highlights how conventions borrowed from scientific practice (e.g., launching if and only if $p < 0.05$) generally fail to maximize north star business metrics. For example, if the potential impact of innovations is heavy-tailed, firms should run many \emph{ex ante} underpowered experiments \citep{azevedo2018b} and launch based on much more relaxed p-value thresholds \citep{sudijono2024optimizing}.  On the other hand, a stricter $p$-value threshold may be preferable for firms that associate a high cost with false discoveries \citep{kohavi2024false}. Our work calls for addressing these questions empirically via the evaluation of decision rule candidates across a fixed set of past experiments.  In our experience, evidence of this kind is essential to convincing decision-makers to consider different rules, as opposed to more abstract arguments based on a superpopulation of hypothetical experiments.

Our work also relates to the literature on proxy metrics in experimentation.  Previous work has taken two distinct approaches to constructing proxy metrics. The first approach assesses the degree to which the proxy and north star metrics agree or disagree in sign, possibly accounting for statistical significance \citep{richardson2023pareto, tripuraneni2024choosing, jeunen2024learning}. This reflects two desirable properties of proxy metrics, namely directional alignment and statistical sensitivity, with less emphasis on an underlying structural model. The second approach targets specific structural estimands based on the north star metric: \cite{bibaut2024learning} develops estimators for the covariance in true treatment effects between the proxy and north star (as opposed to the covariance of estimated treatment effects, which is subject to spurious finite-sample correlations), while \cite{athey2019surrogate} proposes to construct an index of intermediate outcomes to target the unobserved lift in the north star metric. Taking a more general perspective, we treat proxy metrics as a special case of decision rule and propose to estimate their performance empirically, although our methodology also ensures robustness to spuriously correlated noise.

The remainder of the paper is structured as follows:
\begin{itemize}
    \item Section 2 outlines a framework for evaluating decision rules based on their cumulative returns.  Our framework is suited to many practical applications, including the choice of $p$-value thresholds \citep{kohavi2024false, sudijono2024optimizing} and proxy metrics or surrogate models \citep{athey2019surrogate, bibaut2024learning, richardson2023pareto, tripuraneni2024choosing, jeunen2024learning, kharitonov2017learning}.
    \item Section 3 provides an estimator for cumulative returns using past A/B tests and specifies the conditions under which our estimator consistently selects the best rule from a finite set of candidates as the number of experiments (not the number of units in each experiment) increases.
    \item Section 4 illustrates the benefits of our framework in a stylized example. We use our method to evaluate and select between candidate proxy metrics in simulated data and demonstrate the favorable properties of our approach.
    \item Lastly, Section 5 describes the real-world application of our framework to select decision rules at Netflix. Using our method, we demonstrated that a new decision rule would increase the cumulative returns on a sample of past experiments by approximately 33\%.  This evidence led directly to the adoption of our rule to decide new experiments at Netflix.
\end{itemize}

\section{Framework and Notation}
\label{sec:setup}
We study a setting in which a firm has conducted $N$ past experiments indexed by $i = 1, \ldots, N$.  Experiment $i$ has $K_i$ arms, indexed by $k = 1, \ldots, K_i$, with each arm consisting of $M_i$ units, indexed by $m = 1, \ldots, M$.  We typically think of $M_i$ as large in online experiments, numbering in the thousands or even millions.  However, $M_i$ is also bounded for even the largest firms, which have a finite user base to allocate across many parallel experiments \citep{tang2010overlapping}.  Therefore, as in \cite{bibaut2024learning}, our theoretical results focus on the asymptotic regime in which the number of \emph{experiments}, not the number of \emph{users}, goes to infinity.

Our observations consist of a vector of $J$ outcomes $O^i_{km}\in\mathbb R^{J}$ for each unit in each arm of each experiment.  We assume $O^i_{km}$ are independent across all $i,k,m$, while $O^i_{k1},\dots,O^i_{kM_i}$ are identically distributed for each $i,k$. For notational ease, we append vectors $O^i_{km}$ into a single experiment-level vector $\mathbf O^i\in\mathbb R^{K_i\times M_i\times J}$.

A decision rule $D(\mathbf O^i)\in\{1,\dots,K\}$ maps observations from an experiment to a choice of arm. Note that $D$ is implicitly a function of the number of observations $M_i$, encoded in the shape of $\mathbf O^i$. Thus, for example, the decision rule can compute standard errors and account for statistical significance.

We assume there is a known reward function $\psi:\mathbb R^J\to\mathbb R$, and let: $$R_{ik}=\mathbb E[\psi(O^i_{k1})]$$ be the mean reward from arm $k$ in experiment $i$.  For example, if of the $J$ metrics we treat the first one as the reward, we set $\psi(o)=o_1$. The other $J-1$ metrics may still be used (or not) for making a decision.\footnote{For example, the second metric may be a proxy with lower variance whose mean might have some bias to the true reward. Or, the other metrics may be pretreatment variables used for variance reduction via post-stratification or regression adjustment.}

We are interested in the reward generated by a decision rule, averaged over the $N$ past experiments:\footnote{We use weighted averages in practice, but omit the weights here for exposition.}
$$
\Gamma(D)= \frac1N\sum_{i=1}^NR_{iD(\mathbf O^i)},
$$
We propose elevating decision rules that do well in terms of this metric. Specifically, among a finite set of decision rules ${\mathcal D}$, we want to identify $D^* = \text{argmax}_{D \in \mathcal D} R(D)$.

To build intuition, we illustrate some common decision rules. A simple decision rule selects one outcome $\tilde{O}$ and launches the arm with the largest sample mean:
$$D_1(\mathbf O^i)=\argmax_{k\in[K_i]} \frac{1}{M_{ik}} \sum_{m=1}^{M_{ik}} \tilde{O}^i_{km}$$

A slightly more complex rule defines a multivariate function $\phi:\mathbb R^J\to\mathbb R$ that blends multiple outcome metrics and maximizes the mean of that value. This covers cases such as proxy metrics or shrinkage estimators:
$$D_2(\mathbf O^i)=\argmax_{k\in[K_i]} \frac{1}{M_{ik}} \sum_{m=1}^{M_{ik}} \phi(O^i_{km})$$

Lastly, it is also common to consider only launching arms for which $\phi$ is statistically significantly greater than a reference arm. Denoting the reference arm as $k=1$ and the set of such arms as $\mathcal S(\mathbf O^i)$, this rule is written as:

\[
D_3(\mathbf O^i)=
\begin{cases} 
 \argmax_{k\in[\mathcal S(\mathbf O^i)]} \frac{1}{M_{ik}} \sum_{m=1}^{M_{ik}} \phi(O^i_{km}) & \text{ if } |\mathcal S(\mathbf O^i)| > 0 \\ 
1 & \text{ if } |\mathcal S(\mathbf O^i)| = 0.
\end{cases}
\]

We can envision many other variants and extensions, including correcting for pretreatment variables or using lower confidence bounds.  In any event, given the many reasonable-sounding alternatives, we want to avoid choosing rules arbitrarily, and instead choose between them in a data-driven manner. 

\section{Proposed Methodology}
\label{sec:method}

As a baseline, we first consider the following ``naive'' plug-in estimator for $R_{iD(\mathbf{O}^i)}$:
\begin{equation}
    \hat{R}_i^{naive}(D) = \frac{1}{M_i} \sum_{m=1}^{M_i} \sum_{k=1}^{K_i} \mathbf{1}(D(\mathbf{O}^i) = k) \psi(O^i_{km}). 
\end{equation}

$\hat{R}_i^{naive}$ suffers from the well-known winner's curse bias:
\begin{eqnarray}
    && \mathbb{E}[\hat{R}_i^{naive}] - R_{iD(\mathbf{O}^i)} \\\nonumber && \qquad = \sum_{k=1}^{K_i} \Pr(D(\mathbf{O}^i) = k) (\mathbb{E}[\psi(O^i_{k1}) | D(\mathbf{O}^i) = k] - R_{ik}),
\end{eqnarray}
which arises because the conditional expectation of an arm's reward given that is selected is not the same as its unconditional expectation \citep{lee2018winner}.  The winner's curse is vanishing in $M$ (the number of units per experiment) rather than $N$ (the number of experiments), so accumulating more experiments will only yield more precise estimates of the wrong estimand rather than eliminate the bias.

Alternatively, our proposed cross-validation estimator uses experiment splitting to eliminate the winner's curse:

\begin{center}
\shadowbox{%
  \begin{minipage}{0.95\linewidth}
    \textsc{Cross-Validation Algorithm}
    \begin{itemize}
        \item Randomly split the units in experiment $i$ into $P_i$ folds.
        \item For $i = 1, \ldots, N$ and $p = 1, \ldots, P_i$:
        \begin{itemize}
            \item Apply the decision rule $D(\mathbf{O}_{-p}^i)$ to select a winning arm.
            \item Estimate the reward of $D(\mathbf{O}_{-p}^i)$ on the $p$th fold as:
            \begin{equation*}
                \hat{R}^{CV}_{ip} = \frac{1}{M_{ip}} \sum_{m = 1}^{M_{ip}} \sum_{k=1}^{K_i} \mathbf{1}(D(\mathbf{O}_{-p}^i) = k) \psi(O^i_{km})
            \end{equation*}
            where $M_{ip}$ is the number of units in fold $p$ for experiment $i$.
        \end{itemize}
        \item Lastly, estimate the mean reward as the (weighted) mean of $\hat{R}^{CV}_{ip}$ over all folds and experiments:
            \begin{eqnarray*}
                \hat{\Gamma}^{CV} &=& \frac{1}{\sum_{i=1}^NP_i} \sum_{i=1}^N\sum_{p=1}^{P_i}  \hat{R}^{CV}_{ip} 
            \end{eqnarray*}
    \end{itemize}
  \end{minipage}%
}
\end{center}

\medskip

In a nutshell, our estimator separates the data used to choose the winning arm from the data used to estimate the reward of that arm, eliminating the winner's curse.  Note, however, that the cross-validation estimator is also biased because the decision rule is no longer applied to the full sample, changing the probability with which each arm is chosen:
\begin{eqnarray}
    && \mathbb{E}[\hat{R}^{CV}_{ip}] - R_{iD(\mathbf{O}^i)} = \\\nonumber
    && \qquad \sum_{k=1}^K \left(\frac1P\sum_{p=1}^P\Pr(D(\mathbf{O}_{-p}^i) = k) - \Pr(D(\mathbf{O}^i) = k)\right) R_{ik}.
\end{eqnarray}

There are two ways to deal with the remaining bias in the CV estimator.

First, we can increase the number of folds $P$. While the CV estimator is biased for the expected reward of the rule applied to the full data, when all folds are of the same size $M_{ip}=M_i/P$ then it is unbiased for the expected reward of the rule applied to the first $(P - 1) / P$ fraction of the data. These two estimands will become very close as the number of folds $P$ approaches the sample size $M_i$. However, in the best case we can use leave-one-out cross-validation to test the rule that uses $M_i-1$ observations, which is still technically different from using all $M_i$ observations. Although this difference should be very small in applications, it can be meaningful when experiments are just barely powered to detect effects between arms.  In practice, we recommend assessing the sensitivity of the CV estimator to the choice of $P$.

Second, we can provide theoretical guarantees in the specific setting where the amount of data in each experiment is itself random. We show that, up to a scaling, cross-validation incurs zero bias on average in this realistic setting.  Specifically, suppose that $M_i$ is random and Poisson-distributed. This is consistent, for example, with enrolling a fixed percentage of all incoming units into each experiment over a fixed time period. Consider leave-$\ell$-out cross-validation, where there are $P_i=\ell!^{-1} \prod_{j=0}^{\ell-1}(M_i-j)$ folds consisting of all subsets of $M_{ip}=\ell$ units $\{1,\dots,M_i\}$.

\begin{theorem}\label{thm: unbiased cv}
Suppose $M_i\sim\mathrm{Poisson}(M_0)$.  Then, for the leave-$\ell$-out CV estimator, we have $\mathbb E[M_0^{-\ell}\ell!\sum_{p=1}^{P_i} \hat{R}^{CV}_{ip}]=\mathbb E[R_{iD(\mathbf O^i)}]$.
\end{theorem}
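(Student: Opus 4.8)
The plan is to compute the expectation in two stages: first condition on the random sample size $M_i$, then average over its Poisson law, at which point a factorial-moment identity produces exactly the cancellation we need. Throughout, fix experiment $i$ and write $q_k(n) = \Pr(D \text{ selects arm } k)$ when the rule is applied to $n$ units per arm. Because the $O^i_{km}$ are i.i.d.\ across $m$ within each arm, this probability is a function of the \emph{count} $n$ alone and does not depend on which specific units are used; establishing this is the key structural fact that lets all folds contribute equally.

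First I would condition on $M_i = M$ and evaluate a single fold. For a subset $S \subseteq \{1,\dots,M\}$ with $|S| = \ell$, the chosen arm $D(\mathbf O^i_{-S})$ is a function of the out-of-fold units only, hence is independent of the in-fold observations $\{O^i_{km} : m \in S\}$. Since $\mathbb E[\psi(O^i_{km})] = R_{ik}$, independence gives $\mathbb E[\hat R^{CV}_{iS} \mid M] = \sum_k q_k(M-\ell)\,R_{ik}$, the factor $1/\ell$ cancelling the $\ell$ in-fold terms. This value is identical for every size-$\ell$ subset, so summing over the $\binom{M}{\ell}$ folds and multiplying by $\ell!$ yields
$$\ell!\,\mathbb E\Big[\textstyle\sum_p \hat R^{CV}_{ip} \,\Big|\, M\Big] = (M)_\ell \sum_{k} q_k(M-\ell)\,R_{ik},$$
where $(M)_\ell = \prod_{j=0}^{\ell-1}(M-j)$ is the falling factorial.

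The crux is the Poisson step. I would establish the identity $\mathbb E[(M)_\ell\, f(M-\ell)] = M_0^\ell\,\mathbb E[f(M)]$ for $M \sim \mathrm{Poisson}(M_0)$ and any bounded $f$; this follows by writing out the Poisson pmf and using the term-by-term relation $(m)_\ell\,\Pr(M=m) = M_0^\ell\,\Pr(M=m-\ell)$ (the $m!$ cancels against the falling factorial), then re-indexing by $n = m-\ell$. Applying it with $f = q_k$ gives $\mathbb E[(M)_\ell\, q_k(M-\ell)] = M_0^\ell\,\mathbb E[q_k(M_i)]$, and the $M_0^\ell$ exactly cancels the $M_0^{-\ell}$ prefactor in the theorem.

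Finally I would identify $\mathbb E[q_k(M_i)] = \Pr(D(\mathbf O^i) = k)$ by the law of total probability — the unconditional selection probability of the rule on the full, randomly-sized sample — so the whole expression collapses to $\sum_k R_{ik}\,\Pr(D(\mathbf O^i)=k) = \mathbb E[R_{iD(\mathbf O^i)}]$, as claimed. The main obstacle is conceptual rather than computational: recognizing that leaving out $\ell$ units reduces the \emph{selection} sample to $M_i-\ell$, and that under a Poisson count the factorial-moment identity shifts $M_i-\ell$ back to a fresh $\mathrm{Poisson}(M_0)$ variable, so the out-of-fold selection probability averages precisely to the full-sample one.
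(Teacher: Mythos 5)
Your proof is correct and follows essentially the same route as the paper's: condition on $M_i$, use the independence of in-fold and out-of-fold observations to write the summed fold estimates as the falling factorial $(M_i)_\ell$ times the mean reward of the rule applied to $M_i-\ell$ units, and then invoke the Poisson factorial-moment identity $\mathbb E[(M)_\ell f(M-\ell)]=M_0^\ell\,\mathbb E[f(M)]$ to shift the argument back. The only cosmetic differences are that you prove this identity directly from the pmf rather than citing and iterating the Stein--Chen identity as the paper does, and you spell out the arm-level decomposition $\sum_k q_k(\cdot)R_{ik}$ that the paper leaves implicit in its definition of $R_i(t)$.
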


\begin{proof}
Define $R_{i}(t)=\mathbb E[R_{iD(\mathbf O^i)}\mid M_i=t]$. Then $\mathbb E[\hat{R}^{CV}_{ip}\mid M_i]=R_{i}(M_i-\ell)$, so $\mathbb E[\sum_{p=1}^{P_i} \hat{R}^{CV}_{ip}\mid M_i]=P_iR_{i}(M_i-\ell)$. On the other hand, $\mathbb E[R_{iD(\mathbf O^i)}]=\mathbb E[R_{i}(M_i)]$ by total expectation. The Stein-Chen identity \citep{chen} states that, for $X\sim\mathrm{Poisson}(\lambda)$, we have $\lambda\mathbb E[f(X)]=\mathbb E[X f(X-1)]$ for any $f:\mathbb Z\to\mathbb R$. Applying this identity $\ell$ times we obtain $\lambda^\ell\mathbb E[f(X)]=\mathbb E[X(X-1)\cdots(X-\ell+1)f(X-\ell)]$. Since $P_i=\ell!^{-1} \prod_{j=0}^{\ell-1}(M_i-j)$, we obtain the conclusion.
\end{proof}

The practical implication of Theorem~\ref{thm: unbiased cv} is that, if $M_i$ is Poisson, the cross-validation estimator will be able to select the best rule (in terms of $R$) from a finite set of decision rules as the number of experiments $N$ goes to infinity.  Importantly, this does not require the size of each experimental arm $M_i \to \infty$, which is an unrealistic asymptotic regime given that all firms have a finite user pool to allocate to many parallel experiments \citep{tang2010overlapping}.  We formalize and prove the result for leave-one-out cross-validation ($P_i = M_i$) below.

\begin{theorem}
Suppose $M_i\sim\mathrm{Poisson}(M_0)$ and $\|\psi(O_{km}^i)\|_\infty<\infty$. Fix $\abs{\mathcal D}<\infty$ and let: $$\textstyle\hat D\in\argmax_{D\in\mathcal D}\hat \Gamma^{CV}(D)$$ with $\ell = 1$.  Then $\max_{D\in\mathcal D}\Gamma(D)-\Gamma(\hat D)=O_p\left(\sqrt{ \log|\mathcal D|/(NM_0)}\right)$.
\end{theorem}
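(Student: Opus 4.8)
The plan is to establish a standard oracle inequality and then control the key uniform deviation with a concentration argument that leans on the exact unbiasedness from Theorem~\ref{thm: unbiased cv}. First I would reduce the regret to a supremum deviation. Let $D^\star\in\argmax_{D\in\mathcal D}\Gamma(D)$ denote the (data-dependent) oracle. Adding and subtracting $\hat\Gamma^{CV}(D^\star)$ and $\hat\Gamma^{CV}(\hat D)$ and using that $\hat D$ maximizes $\hat\Gamma^{CV}$ makes the cross term nonpositive, giving the familiar bound $\Gamma(D^\star)-\Gamma(\hat D)\le 2\max_{D\in\mathcal D}|\hat\Gamma^{CV}(D)-\Gamma(D)|$. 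It then suffices to show the right-hand side is $O_p(\sqrt{\log|\mathcal D|/(NM_0)})$.

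Next I would handle a single fixed rule $D$. Write $S_i=\sum_{p=1}^{P_i}\hat R^{CV}_{ip}$, so that $\hat\Gamma^{CV}(D)=(\sum_i S_i)/(\sum_i M_i)$ with $P_i=M_i$ for $\ell=1$. Since experiments are independent, the pairs $(S_i,M_i)$ are independent across $i$, and Theorem~\ref{thm: unbiased cv} gives the exact identity $\mathbb E[S_i]=M_0\,\mathbb E[R_{iD(\mathbf O^i)}]$. I would first replace the random denominator by $M_0$: because $\frac1N\sum_i M_i$ is a Poisson average concentrating on $M_0$ with relative fluctuation $O_p(1/\sqrt{NM_0})$ while the numerator is $O(M_0)$ per experiment, the induced error is itself $O_p(1/\sqrt{NM_0})$. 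This reduces matters to the centered average $\frac1N\sum_i Z_i$ with $Z_i=S_i/M_0-R_{iD(\mathbf O^i)}$, whose summands are \emph{exactly} mean-zero, $\mathbb E[Z_i]=\mathbb E[S_i]/M_0-\mathbb E[R_{iD(\mathbf O^i)}]=0$, by Theorem~\ref{thm: unbiased cv}. This is the point at which the elimination of the winner's curse enters.

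The crux, and what I expect to be the main obstacle, is showing $\mathrm{Var}(Z_i)=O(1/M_0)$ rather than the naive $O(1)$; this is exactly where the factor $M_0$ in the denominator of the rate originates. I would decompose the variance by conditioning on $M_i$ and on the leave-one-out decisions $\{D(\mathbf O^i_{-p})\}_p$. Given those decisions the held-out evaluations $\hat R^{CV}_{ip}$ use distinct units and are conditionally independent, so the averaged evaluation noise contributes only $O(1/M_i)$; meanwhile the Poisson fluctuation of $M_i$ about $M_0$, transmitted through the $1/M_0$ scaling, contributes $O(1/M_0)$ and is the leading term, which is the structural reason the Poisson assumption also governs the variance. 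The delicate remaining piece is the discrepancy between the leave-one-out and full-data decisions, $\frac1{M_i}\sum_p(R_{iD(\mathbf O^i_{-p})}-R_{iD(\mathbf O^i)})$, each summand of which is nonzero only when deleting one unit flips the decision. For sample-mean-based rules such as $D_1$--$D_3$ a single unit moves each arm's criterion by $O(1/M_i)$, so the flip probability is $O(1/M_i)$ and this term is lower order; bounding it uniformly, however, requires ruling out near-ties in the decision criterion via an anti-concentration or margin condition on the gap between the top arms. An Efron--Stein bound on $\mathrm{Var}(S_i)$ exposes the same dependence on single-unit influence and makes transparent that the worst case is precisely a near-tie, where one unit can flip many folds simultaneously.

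Finally I would uniformize. Because $M_i$ is Poisson and hence unbounded, the $Z_i$ are sub-exponential rather than bounded, so I would invoke a Bernstein inequality for independent sub-exponential variables, using the $O(1/M_0)$ variance proxy together with the light Poisson tails, to obtain for each fixed $D$ that $|\hat\Gamma^{CV}(D)-\Gamma(D)|=O_p(\sqrt{\log(1/\delta)/(NM_0)})$ with probability $1-\delta$. A union bound over the $|\mathcal D|<\infty$ candidate rules replaces $\log(1/\delta)$ by $\log(|\mathcal D|/\delta)$, and combining this with the reduction in the first step yields $\max_{D\in\mathcal D}\Gamma(D)-\Gamma(\hat D)=O_p(\sqrt{\log|\mathcal D|/(NM_0)})$, as claimed.
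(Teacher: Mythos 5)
Your high-level skeleton matches the paper's: reduce the regret to twice a uniform deviation over $\mathcal D$, use Theorem~\ref{thm: unbiased cv} to center the per-experiment statistics exactly, prove concentration for each fixed $D$, and finish with a union bound. Two differences are worth flagging. First, your step concentrating the random denominator $\sum_i M_i$ is unnecessary: since $\sum_i P_i$ does not depend on $D$, the maximizer of $\hat\Gamma^{CV}$ is also the maximizer of $\frac1N\sum_{i=1}^N X_i(D)$ with $X_i(D)=M_0^{-1}\sum_{p=1}^{M_i}\hat R^{CV}_{ip}(D)$, so the paper runs the oracle inequality directly on this $M_0$-normalized statistic and never touches the denominator. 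Second, and more substantively, the paper does not prove concentration via a variance bound at all. It computes the moment generating function of $\bar X_i(D)=X_i(D)-R_{iD(\mathbf O^i)}$ directly, invoking the Poisson structure to write $\mathbb E[\exp(sX_i(D))]=\exp\left(M_0\left(\mathbb E[e^{(s/M_0)\hat R^{CV}_{ip}}]-1\right)\right)$, a compound-Poisson identity, bounds this by $\exp\left(s^2\mathbb E[(\hat R^{CV}_{ip})^2]/M_0\right)$ for $|s|C\le M_0$, and obtains the sub-Gaussian bound $\exp\left(t^2C^2/(NM_0)\right)$, after which Chernoff and a union bound finish the proof. No margin or anti-concentration condition appears anywhere.

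This matters because your proposal, as written, does not prove the theorem as stated: you explicitly require ``ruling out near-ties \ldots via an anti-concentration or margin condition,'' which is an assumption the theorem does not make. Within your Efron--Stein decomposition the obstacle is genuine: for a rule whose decision is unstable to single deletions (say, one that depends on the parity of $M_i$, or any configuration where every unit can flip a near-tie), the term $\frac1{M_0}\sum_p\left(R_{iD(\mathbf O^i_{-p})}-R_{iD(\mathbf O^i)}\right)$ has variance of order one rather than $O(1/M_0)$, so your route stalls unless you add the extra hypothesis, thereby weakening the result. In fairness, the difficulty you isolate is not fictitious: the paper's compound-Poisson factorization of the MGF is exact when the fold rewards $\hat R^{CV}_{ip}$ are i.i.d.\ marks independent of $M_i$, whereas under leave-one-out CV all folds share the decision data and covary through $\{D(\mathbf O^i_{-p})\}_p$; the very dependence that forces your margin condition is what that factorization quietly assumes away (and Theorem~\ref{thm: unbiased cv} only supplies the first moment, not the MGF). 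So both arguments ultimately lean on stability of the decision rule under single deletions---the paper absorbs it into an unproved MGF identity, while your decomposition surfaces it as an explicit missing assumption. The constructive fix for your write-up would be either to state the margin condition as a hypothesis, or to adopt the paper's normalization trick and MGF route and then justify (or appropriately bound) the factorization step, which is the one place the Poisson assumption does real work beyond unbiasedness.
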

\begin{proof}
Let $X_i(D)=M_0^{-1}\sum_{m=1}^{M_i}\hat R_{ip}^{CV}(D)$ and observe that $\hat D\in\argmax_{D\in\mathcal D}\frac1N\sum_{i=1}^N X_i(D)$. Therefore, $\max_{D\in\mathcal D}\Gamma(D)-\Gamma(\hat D)\leq 2\sup_{D\in\mathcal D}\left|\frac1N\sum_{i=1}^N \bar X_i(D)\right|$ where $\bar X_i(D)= X_i(D) - R_{iD(\mathbf O^i)}$.  Theorem \ref{thm: unbiased cv} established that $\mathbb E \bar X_i(D)=0$. Let $C=\|\psi(O_{km}^i)\|_\infty<\infty$.  Then:
\begin{align*}
    & \mathbb{E}\left[\exp\left(\frac{t}{N}\sum_{i=1}^N \bar{X}_i(D)\right)\right] \\
    &\quad = \left(\mathbb{E}\left[\exp\left(\frac{t}{N}\bar{X}_i(D)\right)\right]\right)^N \\
    &\quad = \left(e^{-s\Gamma(D)} \mathbb{E}\left[\exp\left(s X_i(D)\right)\right]\right)^N \bigg|_{s=t/N} \\
    &\quad = \left(e^{-s\Gamma(D)} \exp\left(M_0\left( \mathbb{E}\left[e^{(s/M_0) \hat{R}_{ip}^{CV}}\right] - 1\right)\right)\right)^N \bigg|_{s=t/N} \\
    &\quad \le \left(\exp\left(\frac{s^2\mathbb{E}[(\hat{R}_{ip}^{CV})^2]}{M_0}\right)\right)^N \bigg|_{s=t/N} \\
    &\quad \le \exp\left(\frac{t^2 C^2}{NM_0}\right)
\end{align*}
for $|s|C \leq M_0$.  The rest follows by a Chernoff and union bound.
\end{proof}

\section{Example: Choosing a Proxy Metric from Past Experiments}
\label{sec:proxy}

Above, we developed notation to formalize a target estimand (expected reward), demonstrated that an estimator based on cross-validation does not suffer from the winner's curse, and discussed ways to counteract the remaining bias in the CV estimator. This section further builds intuition by focusing on a specific application of our method: choosing a good proxy metric for deciding experiments \citep{bibaut2024learning,tripuraneni2024choosing,richardson2023pareto}.  

Proxy metrics are commonly used in digital experimentation to accelerate decision-making by replacing statistically insensitive and/or long-term metrics (e.g., user retention) with more sensitive and/or short-term metrics (e.g., user engagement). However, choosing a good proxy metric is non-trivial as interventions that increase the proxy metric may not increase the north star metric, even if the two variables are strongly correlated in either experimental \citep{bibaut2024learning} or observational \citep{athey2019surrogate} data.  To decide between different proxy metrics, we can apply the above estimators for the expected reward associated with each proxy and choose the one(s) which leads to the largest estimated reward.

\subsection{Setup}

To simplify our numerical example, we assume $N$ past experiments with $K = 2$ arms each and $M_i = M$ units per arm.  In each experiment, we estimate treatment effects $\hat{\tau}_Y^i$ and $\hat{\tau}_S^i$ on the north star and proxy metric, respectively.  We assume that $M$ is large and that our treatment effect estimator is the difference in sample means, so that the observed treatment effects are Gaussian conditional on the true treatment effects:
\begin{equation}
    (\hat{\tau}_Y^i, \hat{\tau}_S^i) \sim \mathcal{N}\left((\tau_Y^i, \tau_S^i), \frac{2}{M}\Omega\right),
\end{equation}
where $\Omega$ is the unit-level sampling error, which we assume is homoskedastic across arms.

Importantly, we do not assume that the covariance matrix $\Omega$ is diagonal.  In general, $Y$ and $S$ will be correlated at the unit level outside of any particular experiment \citep{cunningham2020interpreting}.  For example, highly engaged users ($S$) are also likely to be more retentive ($Y$), regardless of whether interventions that increase engagement also increase retention.  Thus, we distinguish between the \emph{true} associations between the underlying treatment effects and the \emph{observed} associations between treatment effects estimated in finite experiments \citep{bibaut2024learning}. This finite-sample noise amplifies the winner's curse for proxy metrics and motivates the use of our CV estimator.

We denote the decision rule that utilizes metric $S$ as $D_S$ and measure the quality of $S$ as a proxy for $Y$ using the cumulative reward across past experiments associated with $D_S$.\footnote{While our theoretical analysis focused on the expected reward, here we focus on the cumulative reward, which we find is more intuitive to non-technical stakeholders.} In our notation, these rewards are:
\begin{equation}
N \Gamma(D_S) = \sum_{i=1}^N R_{iD_S(\mathbf O^i)},
\end{equation}
where: \begin{equation}
    R_{iD_S(\mathbf O^i)} = \begin{cases}
        \tau_Y^i & \text{if $\hat{\tau}_S^i > 0$} \\
        0 & \text{otherwise}.
    \end{cases}
\end{equation}
In words, $N \Gamma(D_S)$ is the expected cumulative impact on $Y$ of using a decision rule $D_S$ that launches the treatment so long as the observed treatment effect on $S$ is positive.\footnote{In practice, it may be more common to require the treatment to be \emph{significantly} better than control.  Incorporating this requirement does not qualitatively affect the following analysis, as it simply multiplies the expectations in Equations \ref{eqn:target}-\ref{eqn:cv-target} by a scalar.}  Intuitively, if $N \Gamma(D_S)$ is large, then $S$ is a good proxy for $Y$.

\subsection{Closed-Form Analysis}

For a simple closed-form analysis of $N \Gamma(D_S)$, we assume the true treatment effects across experiments $i$ are also Gaussian and centered on zero:
\begin{equation}
    (\tau_R^i, \tau_S^i) \sim \mathcal{N}\left(\mathbf{0}, \Lambda\right).
\end{equation}
Then, letting $\Lambda_{ij}$ and $\Omega_{ij}$ denote the $ij$'th entry of $\Lambda$ and $\Omega$, the true expected reward is proportional to:
\begin{equation}
    \mathbb{E}[R_{iD_S}] \propto \frac{\Lambda_{12}}{\sqrt{\Lambda_{22} + 2 \Omega_{22}/M}},
    \label{eqn:target}
\end{equation}
which intuitively is increasing in the covariance in true treatment effects.\footnote{See Appendix~\ref{sec:app-closed} for derivations of Equations \ref{eqn:target}-\ref{eqn:cv-target}.}

In contrast, the expectation of the naive estimator is increasing not just in the covariance of true treatment effects, but also in the covariance of the unit-level sampling error:
\begin{equation}
    \mathbb{E}[\hat{R}^{naive}_{iD_S}] \propto \frac{\Lambda_{12} + 2\Omega_{12} / M}{\sqrt{\Lambda_{22} + 2 \Omega_{22}/M}}.
    \label{eqn:naive-target}
\end{equation}
Therefore, the naive estimator can prefer proxy metrics whose correlation with the north star is driven more by noise than by true treatment effects.  A standard example is clickbait: Although highly engaged users may be more retentive outside of any particular experiment ($\Omega_{12}$ large), interventions that merely increase clickbait will not have the desired effect on retention ($\Lambda_{12}$ small).

Note that the risk of bias is greater when the scale of true treatment effects is very small relative to the unit-level sampling variance, as is often the case in digital experiments \citep{larsen2024statistical}.  Thus, the naive estimator will skew towards sensitive ``upper funnel'' metrics such as page visits, as compared to ``lower funnel'' metrics that are harder to move (e.g., signups or checkouts) but more strongly correlated with north star business metrics like revenue.

Lastly, the expectation of the cross-validation estimator is given by:
\begin{equation}
    \mathbb{E}[\hat{R}^{CV}_{iD_S}] \propto \frac{\Lambda_{12}}{\sqrt{\Lambda_{22} + 2 \Omega_{22}/M_p}},
    \label{eqn:cv-target}
\end{equation}
where $M_p = M(P - 1)/P$.  Like the true reward, this expectation is also insensitive to the noise covariance $\Omega_{12}$, depending only on the covariance between the true treatment effects.  It is also negatively biased due to the division in (\ref{eqn:cv-target}) by $M_p < M$, although this can be mitigated by increasing $P$. 

\begin{figure}
    \centering
    \includegraphics[width=0.94\linewidth]{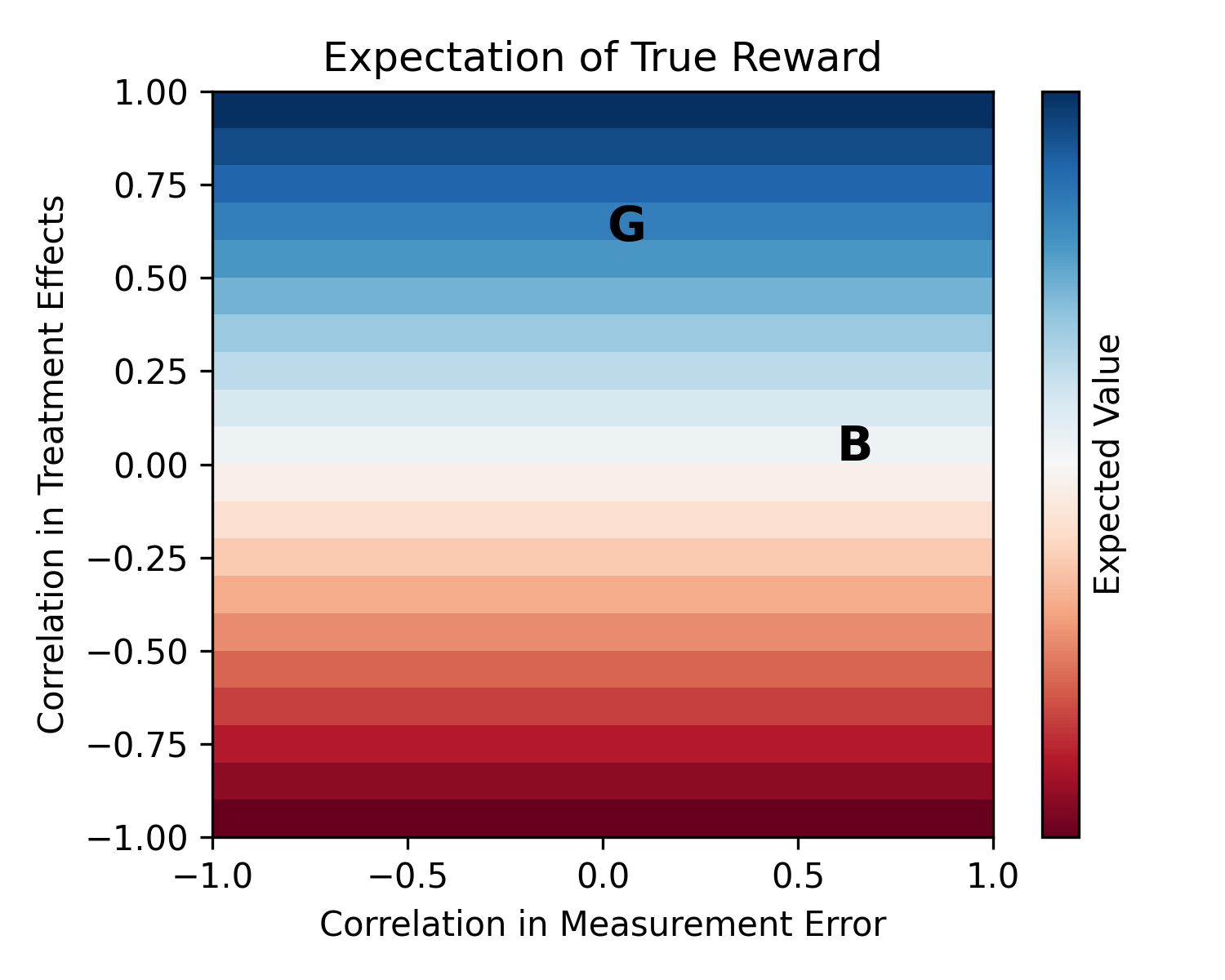}
    \includegraphics[width=0.94\linewidth]{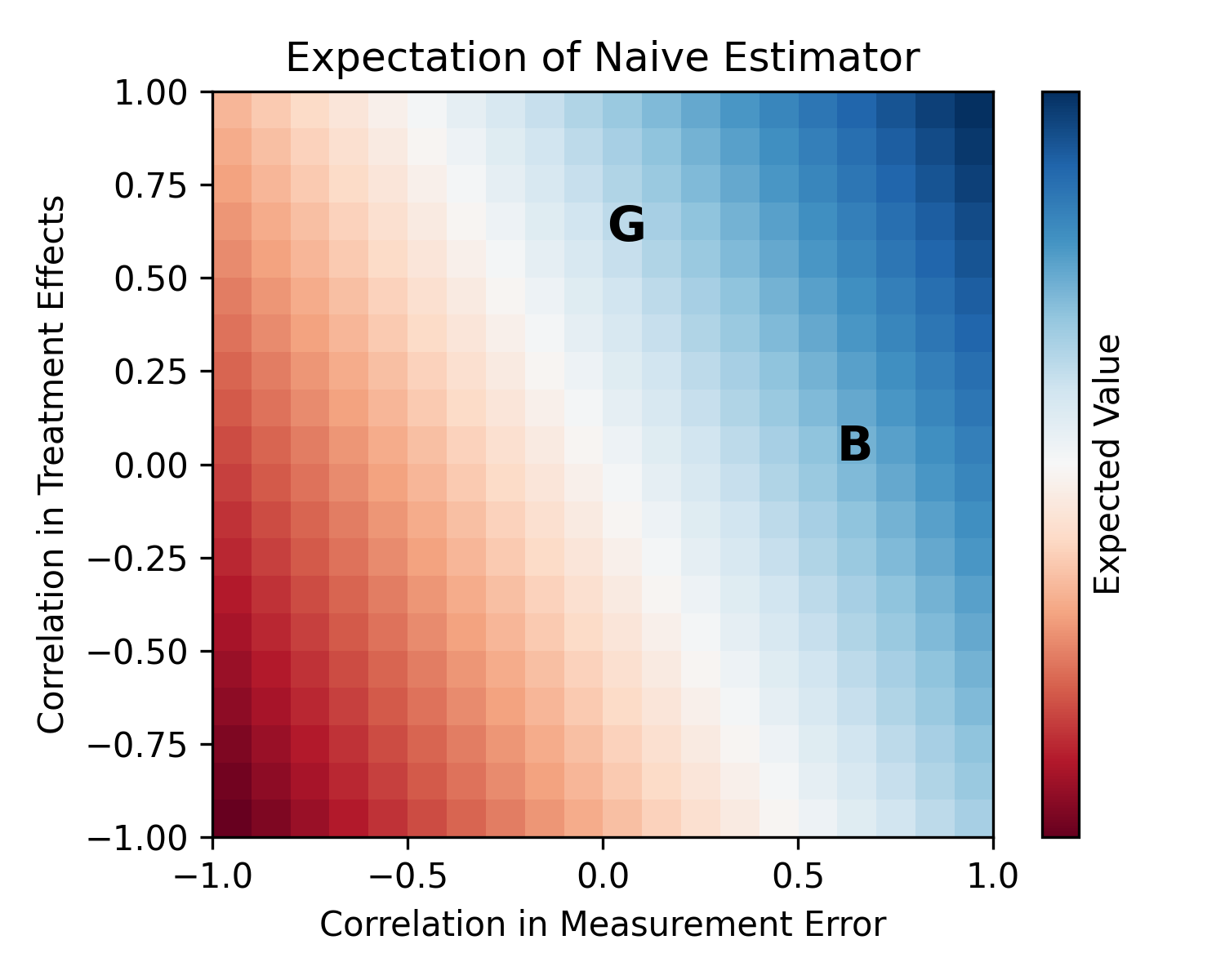}
    \includegraphics[width=0.94\linewidth]{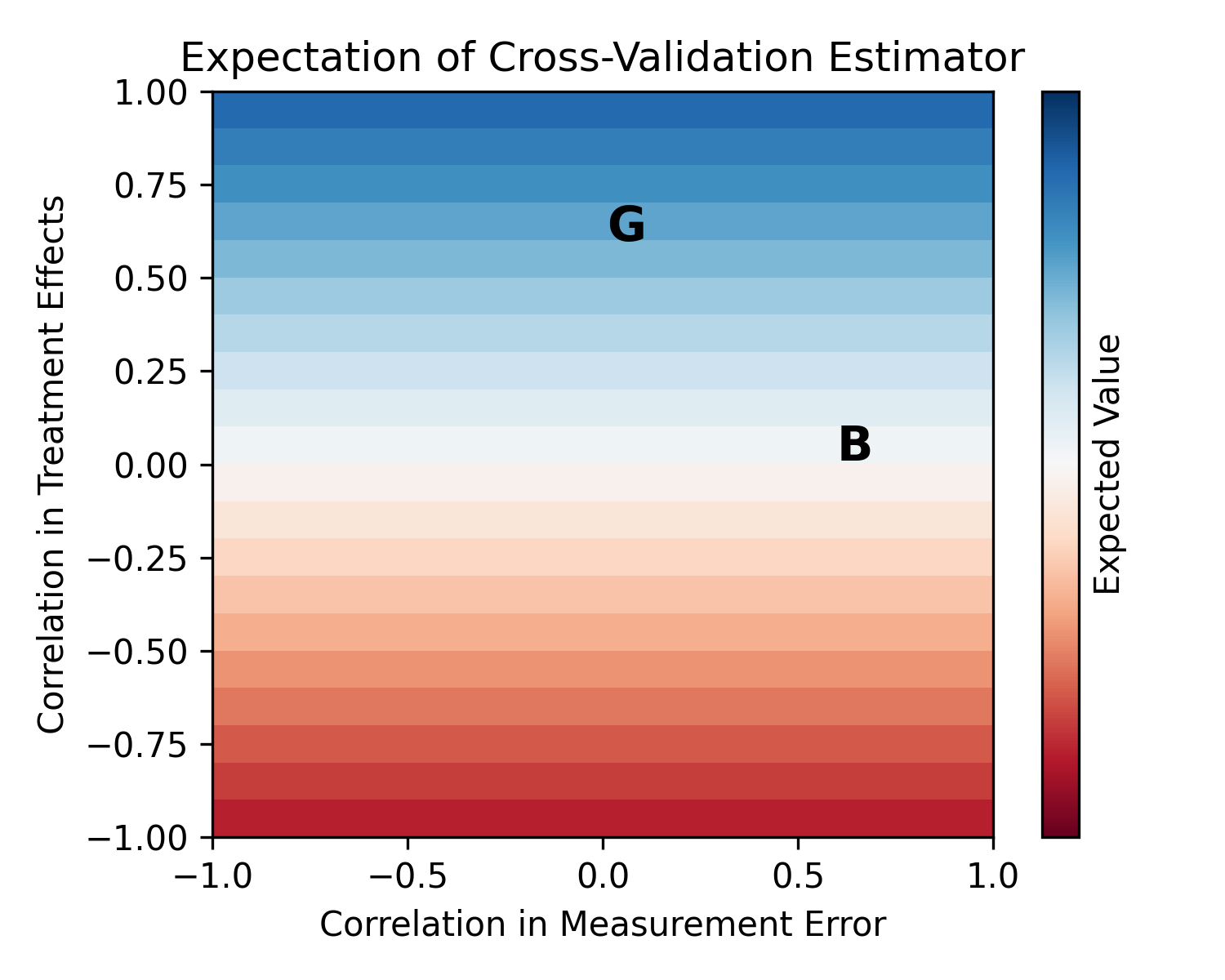}
    \caption{Expected Values of True Reward and Estimators.  While the true expected reward only depends on the covariance of true treatment effects, the naive estimator is sloped and can prefer ``Bad'' proxies to ``Good'' proxies.  The CV estimator is negatively biased (diluted) but still ranks proxies correctly.}
    \label{fig:rewards}
\end{figure}

Figure~\ref{fig:rewards} illustrates these dynamics. In the top panel, we plot the true reward $\mathbb{E}[R_{iD_S}]$ as the correlation between true treatment effects ($y$-axis) and measurement error ($x$-axis) varies from -1 to 1.\footnote{The other parameter values are set to the same values as in our simulation below.}  Regions of the same shade represent level sets (the same value of the reward).  As the panel shows, the true expected reward is only sensitive to the correlation in true treatment effects.

To help frame our subsequent analyses, we also plot the coordinates of a ``\textbf{G}ood'' proxy metric whose treatment effects strongly covary with treatment effects on the north star metric and a ``\textbf{B}ad'' proxy metric whose treatment effects have almost no correlation with treatment effects on the north star.  As expected, the true reward associated with the \textbf{G}ood proxy metric is greater than that of the \textbf{B}ad proxy metric.

In contrast, the middle panel of Figure~\ref{fig:rewards} plots the expected value of the naive estimator as we vary the same correlations.  Due to its sloped level sets, the naive estimator associates a greater reward with the bad proxy metric due to its strongly correlated measurement error.

Lastly, the bottom panel of Figure~\ref{fig:rewards} plots the expected value of our CV estimator with just two folds. The flat level sets demonstrate that the CV estimator will rank correctly based on the true reward, although their shade is diluted due to the secondary source of bias in the CV-estimated reward.  In practice, this bias is easily mitigated by increasing the number of folds. 

\subsection{Simulation Study}

To shed light on additional properties of our estimators, we conduct a simulation study.\footnote{Code to replicate all figures except those in Section~\ref{sec:application} can be found at \url{https://github.com/winston-chou/linear-proxy-metrics}.}  The simulation parameters reflect important qualitative aspects of digital experimentation in settings like ours.\footnote{Their specific numerical values are reported in the Appendix.}  Specifically, we simulate a weak signal-to-noise regime in which the scale of true treatment effects is small relative to the scale of measurement error \citep{larsen2024statistical, kohavi2014seven}.  We also study the effect on our estimators as the number of experiments $N$ and the number of units per experiment $M$ tend to infinity.  The former asymptotic regime is more realistic than the latter: Although companies like Netflix can cheaply run additional experiments, the number of units (e.g., users or subscribers) available for each experiment remains bounded due both to physical constraints and to the need to allocate many experiments in parallel \citep{tang2010overlapping}.

\paragraph{Noise-to-signal ratio.}  As can be seen from Equations~\ref{eqn:target} and~\ref{eqn:naive-target}, the relative bias of the naive estimator is proportional to the ratio of the covariance in measurement error to the covariance in true treatment effects $\Omega_{12} / \Lambda_{12}$.  Letting $\sigma_S$ denote the standard deviation of $S$ and $\sigma_{\tau_S}$ the standard deviation of $\tau_S$, the relative bias is therefore proportional to $\sigma_S / \sigma_{\tau_S}$, the noise-to-signal ratio of $S$.  Therefore, as treatment effects on the proxy metric become infinitely small relative to their measurement error, the relative bias of the naive estimator also tends to infinity.  This is illustrated in Figure~\ref{fig:bias-snr}, where we simulate the bias of the naive estimator as the scale of the noise $\sigma_S$ tends to infinity with $\sigma_{\tau_S}$ fixed.  As the plot shows, the relative bias of the naive estimator is also unbounded.  In contrast, because the CV estimator is biased towards zero, its relative bias is negative for positive rewards and bounded between $-100\%$ and $100\%$.

\begin{figure}[ht]
    \centering
    \includegraphics[width=\linewidth]{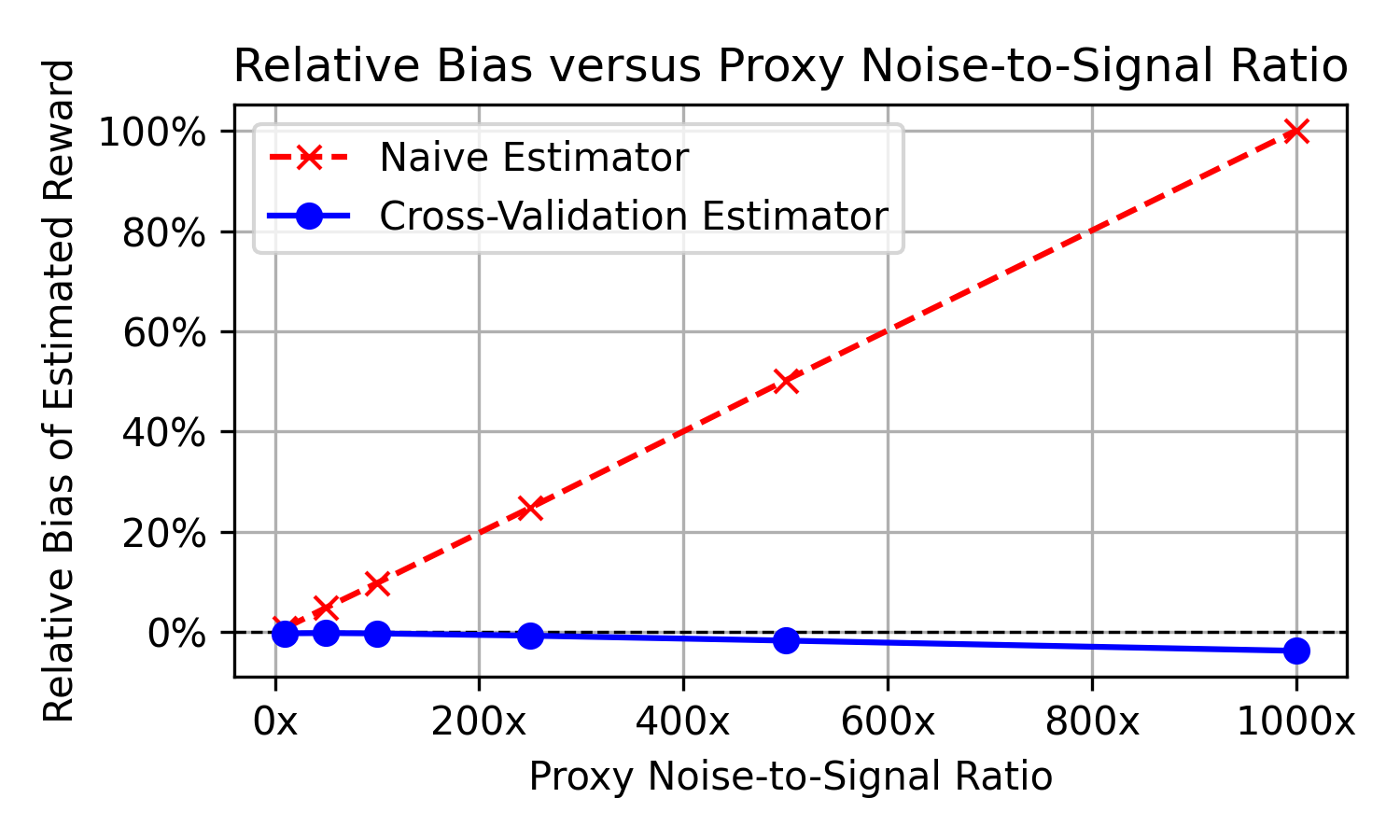}
    \caption{Bias of Naive and CV Estimators vs. Noise-to-Signal Ratio of $S$.  As the scale of the measurement error in $S$ tends to infinity for a fixed scale of \emph{treatment effects on $S$}, the bias of the naive estimator tends to infinity, whereas the relative bias of the CV estimator is bounded.}
    \label{fig:bias-snr}
\end{figure}

\paragraph{Number of units per experiment.}  While the bias of both estimators tends to zero as $M \to \infty$, the CV estimator can be many times less biased at smaller values of $M$.  This is shown in Figure~\ref{fig:bias-m}, where we again contrast the \textbf{G}ood and \textbf{B}ad proxies charted in Figure~\ref{fig:rewards}.  The top panel of Figure~\ref{fig:bias-m} shows the true rewards associated with each proxy metric as we increase $M$.  For reference, we also plot these rewards with light-opacity dashed lines in the other panels.  As the middle panel of Figure~\ref{fig:bias-m} shows, the naive estimator actually favors the \textbf{B}ad proxy at smaller values of $M$ due to its strongly correlated measurement error.  In contrast, the CV estimator correctly chooses the \textbf{G}ood proxy at all values of $M$, although it is very slightly biased towards zero.

\begin{figure}[ht]
    \centering
    \includegraphics[width=\linewidth]{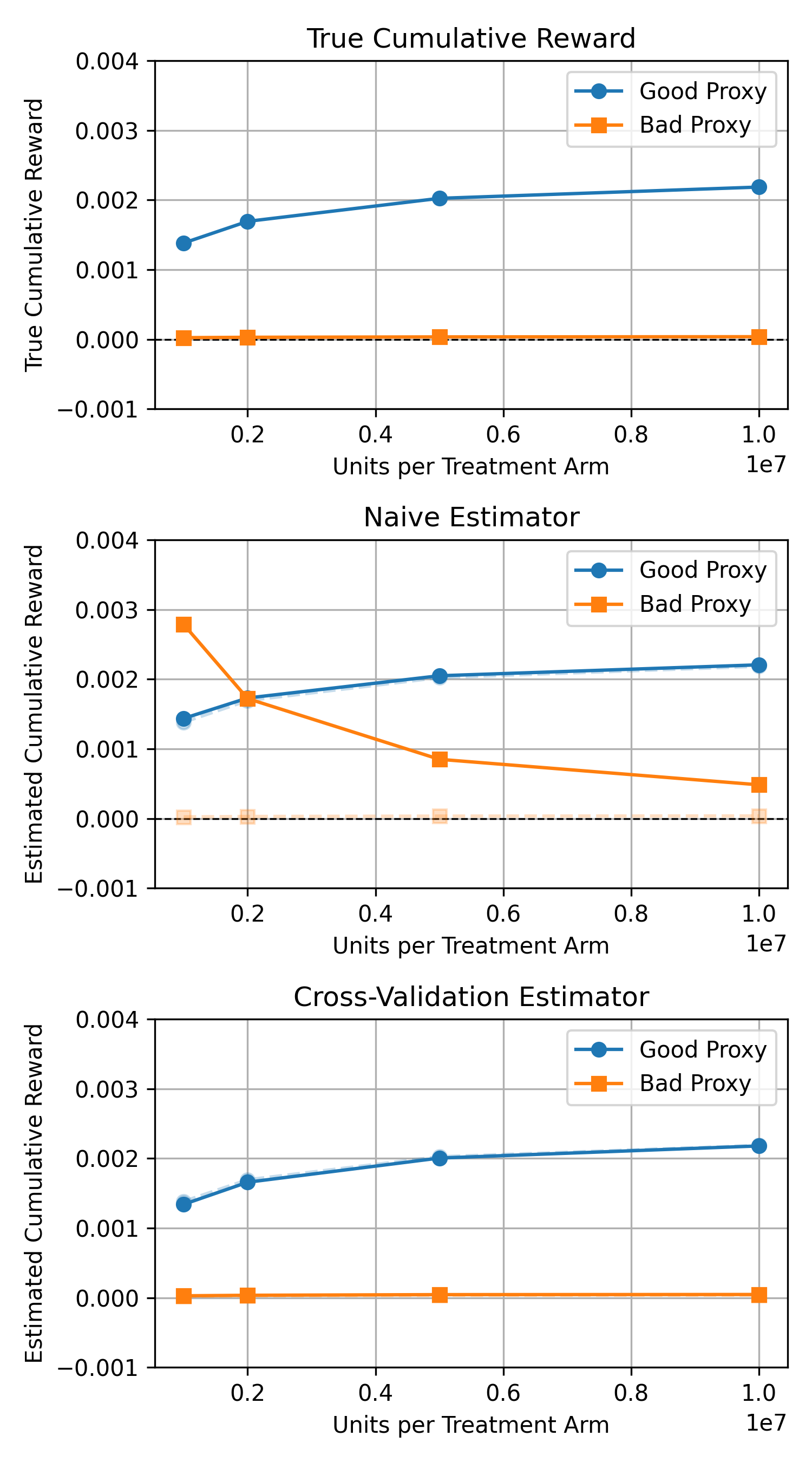}
    \caption{Asymptotics as $M \to \infty$.  The light-opacity dashed lines denote the truth, while the solid lines denote estimators.  While the bias of both estimators goes to zero as the number of units per treatment arm $M \to \infty$, the naive estimator is much more biased at small values of $M$, which can lead to the wrong choice of proxy.}
    \label{fig:bias-m}
\end{figure}

\paragraph{Number of experiments.}  Lastly, Figure~\ref{fig:bias-n} shows how our estimators behave with $M$ fixed but the number of experiments $N \to \infty$, which we consider the more realistic asymptotic regime for our setting.  As before, we plot the true cumulative returns in the topmost panel and then with light-opacity dashed lines in the remaining panels as a reference point.  Note that, with $M$ fixed, increasing $N$ only exaggerates the bias of the naive estimator.  This problem is more severe for the bad proxy due to its greater noise-to-signal ratio.  In contrast, the CV estimator closely tracks the true cumulative reward, although with a slightly pessimistic bias.

\begin{figure}[!ht]
    \centering
    \includegraphics[width=\linewidth]{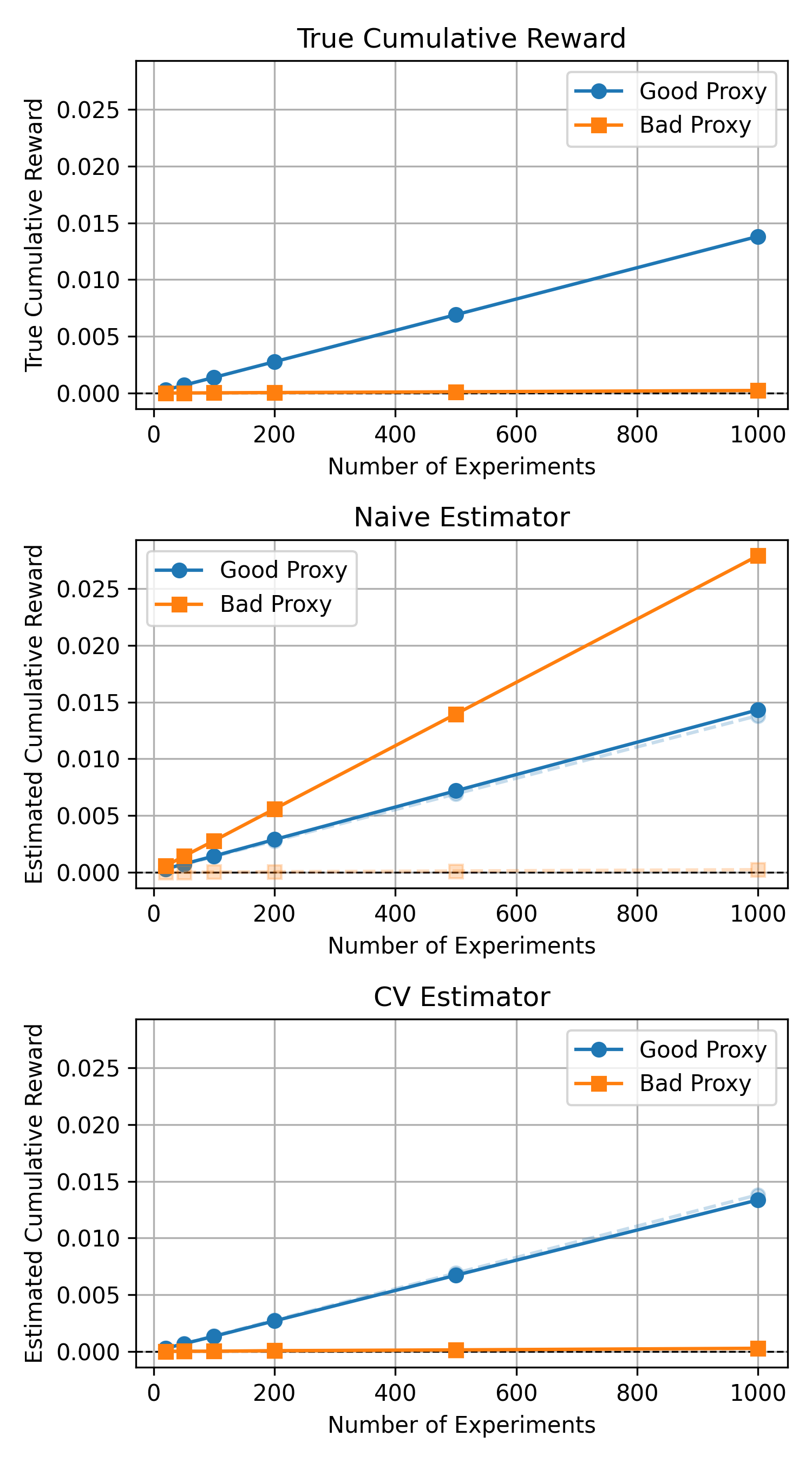}
    \caption{Asymptotics as $N \to \infty$.  The light-opacity dashed lines denote the truth, while the solid lines denote estimators.  Increasing the number of experiments $N$ without increasing the size of each experiment only increases the bias of the naive estimator.}
    \label{fig:bias-n}
\end{figure}

\section{Real-World Application}
\label{sec:application}

In this section, we describe the application of our method to successfully advocate for a new experimentation decision rule at Netflix.  This work is part of a larger project to develop proxy metrics using meta-analysis, initially described in \cite{bibaut2024learning}.  In that paper, we developed methods for fitting linear structural models of treatment effects using data from past experiments.  We deployed these methods to construct an improved proxy metric as a linear combination of component metrics, which include the previous status quo metric. We then used techniques in this paper to demonstrate the favorability of the resulting metric relative to the previous decision metric.  Our method was essential to convincing our stakeholders to adopt the new decision rule.  This led directly to the adoption and implementation of the new proxy metric on Netflix's experimentation platform, where it has been computed for 257 A/B tests as of this writing.

\subsection{Setup}
The data for this case study come from 123 A/B past tests at Netflix.  We chose these tests to be representative of the primary testing area that we wanted our new proxy metric to inform.  Specifically, we sampled tests of in-app interventions aimed at improving engagement, such as innovations to our recommendation and personalization algorithms.

Letting $Y$, $Z_0$, and $Z_1$ denote the north star metric, status quo proxy metric, and the new ``challenger'' proxy metric, respectively, we estimated the cumulative returns to a handful of candidate launch rules:\footnote{We also experimented with a number of other combinations (such as different tie-breakers and requiring statistical significance in both $Z_0$ and $Z_1$), which we omit for brevity.}

\begin{enumerate}
    \item Launch the arm with the highest statistically significant gain in $Z_0$.
    \item Launch the arm with the highest statistically significant gain in $Z_1$.
    \item Launch the arm with the highest gain in $Z_1$ given statistical significance in either $Z_0$ or $Z_1$.
\end{enumerate}

\subsection{Results}

\begin{figure}[ht]
    \centering
    \includegraphics[width=\linewidth]{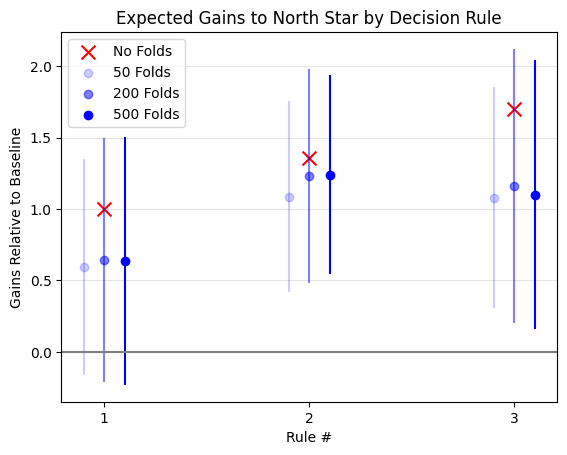}
    \caption{Estimated Cumulative Returns to Decision Rules Across 123 A/B Tests at Netflix.  We applied the methods in this paper to estimate the naive and cross-validated cumulative returns to three different decision rules at Netflix, contrasting two challenger rules to the status quo rule.  We find that a new proxy metric would have increased cumulative returns by over $33\%$ ($p$-values between $0.07$ and $0.09$).}
    \label{fig:empirics}
\end{figure}

Figure~\ref{fig:empirics} shows our results by plotting the naive (``No Folds'') estimate of the cumulative returns (red crosses) alongside our CV estimates under increasingly numerous folds (blue circles).  Error bars represent 95\% confidence intervals. To maintain confidentiality, we scale all values by baseline gains (the ``No Folds'' variant for Rule 1).

As expected, the naive estimator systematically overestimates the cumulative returns of past experiments due to the winner's curse. Although cross-validation could overshrink these estimates, the estimates remain stable as we increase the number of folds, indicating that the negative bias of the CV estimator is small in our application.

As the figure shows, our methods can also influence qualitative comparisons between decision rules. The naive ranking places Rule 3 far above Rule 2, while the CV estimator makes the two rules appear comparable. This is likely due to a combination of $Z_0$ having a more severe correlated errors problem than $Z_1$, as well as the slightly higher Type I error rate incurred by using an either-or launch rule. Altogether, the results demonstrate that the choice of estimator can meaningfully alter both quantitative (estimates of cumulative impact) and qualitative (choice of rule) business conclusions.

We find that both ``challenger'' rules would meaningfully outperform the status quo. In particular, we estimate that adopting Rule 3 would have increased north star metric gains in excess of 33\%, with $p$-values between 0.07 and 0.09 depending on the number of folds. Upon closer inspection, we found that these improvements stemmed largely from a few high-impact tests in which our new proxy metric disagreed sharply with the previous decision metric.  Individually, these tests were underpowered for the north star metric, but collectively they accounted for a substantial improvement.  Although marginally significant under traditional frequentist thresholds, the large gains at stake and modest implementation costs gave us confidence to recommend Rule 3. These estimated gains, as well as the explainability of our methodology, led directly to the adoption of this new launch rule by decision makers.  Our rule has since become recognized as the best practice for resolving metric disagreements in experiments.

\section{Conclusion}
\label{sec:conclusion}

This paper proposes to evaluate candidate decision rules using their cumulative returns to business north star metrics in past experiments. In our experience, cumulative returns are an intuitive evaluation metric that resonates with less technical stakeholders, making them a compelling quantity to report across a broad set of applications. Still, accurately estimating cumulative returns can be challenging when treatment effects are small relative to measurement error, as is often the case in digital experimentation, including at Netflix \citep{bibaut2024learning,larsen2024statistical, ejdemyr2024estimating}.

We demonstrate theoretically and empirically that cross-validation can effectively estimate the reward of a decision rule in such conditions.  In contrast, naive approaches that reuse the same data for both decision rule selection and evaluation lead to classical winner's curse biases, even when experimenters are optimizing for proxy metrics rather than north star metrics.

An implicit assumption underlying this work is the independence of treatment effects across tests: The difference in rewards between two treatment arms within an experiment cannot depend on the launch decisions of other experiments. This assumption could be violated if experiments implement redundant interventions or if previous launch decisions radically affect the design and impact of future tests. We believe that this is a reasonable approximation in our setting, which is characterized by many modest interventions across diverse domains and experiments that are typically planned far in advance.  That being said, one valuable direction for future research is to explore estimators that allow for dependence across experiments, for example by encoding dependencies between tests in cross-validation.  Relatedly, we also implicitly assume a degree of stationarity, such that rules optimized for past returns will also be useful for future experiments.  This assumption can be monitored by re-running the analysis periodically over time.

A second, more applied direction in which to extend this research is to evaluate other dimensions of experiment decision-making beyond the choice of outcome metric.  For example, recent papers have debated whether statistical significance thresholds should be more restrictive or permissive \citep{kohavi2024false, sudijono2024optimizing}.  The methodology outlined here provides a principled way for firms to evaluate the returns to different thresholds using actual experiments, and can easily be extended to include costs.

Lastly, another direction for future research is to \emph{learn} good decision rules (rather than pick the best rule from an \emph{ex ante}-defined set).  Here, an interesting trade-off is between optimality and interpretability:  a good decision rule should increase the returns to experimentation, but also be coherent enough to be trusted, understood, and remembered by decision-makers with varying statistical aptitudes.  We leave this to future work.

\section*{Acknowledgements}

For supporting this work, we thank Adrien Alexandre, Ai-Lei Sun, Danielle Rich, and Patric Glynn.  Code to replicate all figures except those in Section~\ref{sec:application}, which use proprietary data, can be found at \url{https://github.com/winston-chou/linear-proxy-metrics}.

\balance
\bibliography{bib}

\pagebreak

\section{Appendix}

\subsection{Derivation of Closed-Form Analysis}
\label{sec:app-closed}

To derive Equations \ref{eqn:target}-\ref{eqn:cv-target}, we use two properties of Gaussian random variables: bivariate conditional expectations,
\begin{equation}
    \mathbb{E}[A|B] = \mu_A + \frac{\sigma_{AB}}{\sigma^2_B}(B - \mu_B)
    \label{eqn:bivariate}
\end{equation}
and the Mills ratio:
\begin{equation}
    \mathbb{E}[A|A > 0] = \mu_A + \sigma_A \left( \frac{\phi(-\frac{\mu_A}{\sigma_A})}{1 - \Phi(-\frac{\mu_A}{\sigma_A})} \right).
    \label{eqn:mills-ratio}
\end{equation}

Applying the law of iterated expectations:
\begin{align*}
    \mathbb{E}[A | B > 0] &= \mathbb{E}[\mathbb{E}[A | B] | B > 0] = \mathbb{E}\left[ \mu_A + \frac{\sigma_{AB}}{\sigma_B^2} (B - \mu_B) | B > 0 \right] \\
    &= \mu_A + \frac{\sigma_{AB}}{\sigma^2_B} (\mathbb{E}[B | B > 0] - \mu_B) \\
    &= \mu_A + \frac{\sigma_{AB}}{\sigma_B} \left( \frac{\phi(-\frac{\mu_B}{\sigma_B})}{1 - \Phi(-\frac{\mu_B}{\sigma_B`})} \right),
\end{align*}
which is proportional to $\sigma_{AB} / \sigma_B$ when $\mu_A$ is equal to zero (as it is in our setup).

To derive Equation~\ref{eqn:target}, we set $A$ to the true treatment effect $\tau_Y^i$ and $B$ to the estimated treatment effect $\hat{\tau}_S^i$.  In Equation~\ref{eqn:naive-target}, we set $A$ to the estimated treatment effect $\hat{\tau}_Y^i$ and $B$ to the estimated treatment effect $\hat{\tau}_S^i$.  Lastly, in Equation~\ref{eqn:cv-target}, we set $A$ to the treatment effect estimated on the held-out fold and $B$ to the treatment effect on $S$ estimated on the remaining folds.

\subsection{Simulation Details}

Unless otherwise noted in the text, all simulation parameters are set to the following values:
\begin{eqnarray}
    N &=& 100 \\
    M &=& 1,000,000 \\
    \sigma_{\tau_Y} &=& 0.0001 \\
    \sigma_{\tau_S} &=& 0.01 \\
    \rho_\tau &=& 0.8 \\
    \sigma_Y &=& 0.10 \\
    \sigma_S &=& 10 \\
    \rho &=& 0.4 \\
    P &=& 10 \\
    \text{number of simulations} &=& 10,000.
\end{eqnarray}

\end{document}